\documentclass[]{article}
\usepackage[T1]{fontenc}
\usepackage{amsmath}
\usepackage{amsfonts}
\usepackage{amssymb}

\usepackage{amsthm}

\usepackage{hyperref}
\usepackage{alltt}
\usepackage{cleveref}
\usepackage{tikz}
\usepackage[linesnumbered]{algorithm2e}
\usepackage{stmaryrd}
\usepackage{algorithmicx}
\usetikzlibrary{arrows,automata}
\usetikzlibrary{automata,positioning,decorations.pathreplacing}
\newtheorem{theorem}{Theorem}
\newtheorem{mytheorem}[theorem]{Theorem}{\bfseries}{}
\newtheorem{mydefinition}[theorem]{Definition}{\bfseries}{}
\newtheorem{mycorollary}[theorem]{Corollary}{\bfseries}{}


\newcommand{\bit}[1]{\mathtt{#1}}

\newcommand{\pijl}[1]{\mathbin{\xrightarrow{\raisebox{-0.25ex}[0cm][0cm]{\scriptsize$#1$}}}}

\newcommand{\Nat}{{\mathbb{N}}}

\newcommand{\Lang}{\mathcal{L}}
\newcommand{\simby}{\sqsubseteq}

\begin{document}
\author{
Jan Friso Groote\\[1ex]
\small Eindhoven University of Technology\\
\small Eindhoven, The Netherlands\\
\small \texttt{j.f.groote@tue.nl}\\
\and
Jan Martens \\[1ex]
\small Leiden University\\
\small Leiden, The Netherlands\\
\small \texttt{j.j.m.martens@liacs.leidenuniv.nl}
}



\title{A Quadratic Lower Bound for Simulation}

\maketitle
\begin{abstract}
    \noindent%
We show that deciding simulation equivalence and simulation preorder have quadratic lower bounds assuming
that the Strong Exponential Time Hypothesis holds. 
This is in line with the best know quadratic upper bounds of simulation equivalence. This means
that deciding simulation is inherently quadratic. 
A typical consequence of this result is that computing simulation equivalence is fundamentally harder than bisimilarity.
\end{abstract}

\section{Introduction}
Commonly, processes are abstractly represented as directed graphs with states and
transitions where states or transitions are labelled. 
One process is simulated by another if each step of the first
process can be simulated by the second. Two processes are simulation equivalent
if they can both simulate each other. Simulation preorders for the comparison of
the behaviour of programs have been defined in \cite{DBLP:conf/ijcai/Milner71}.
Algorithms for deciding simulation and simulation preorders were only defined
decades later
\cite{DBLP:journals/scp/BloomP95,DBLP:conf/focs/HenzingerHK95,DBLP:journals/tocl/BustanG03}
and they all have a complexity in terms of the number of states times the number
of transitions. In
\cite{DBLP:conf/lics/RanzatoT07,DBLP:conf/cav/GlabbeekP08,DBLP:journals/fuin/CrafaRT11,DBLP:journals/acta/Ranzato14}
these algorithms are improved by reducing both the required memory footprint and
the time complexity to the number of simulation equivalence classes times the
number of transitions, but as the number of equivalence classes can be the
number of states this essentially means that all algorithms have quadratic time
complexity in terms of the input.

This raises the question whether calculating simulation preorder and simulation equivalence is essentially 
quadratic. Hitherto, there is no real answer to this question. The only analysis is that determining 
simulation is at least as hard as bisimilarity \cite{DBLP:conf/concur/KuceraM02}. This follows from the
observation that via a polynomial translation of the process graph bisimulation equivalence reduces to simulation
equivalence, showing that modulo this transformation simulation can be used to calculate bisimilarity. 
For bisimilarity there is a quasi-linear lower bound assuming partition refinement is used 
\cite{DBLP:journals/lmcs/GrooteMV23}. 

In this paper we provide an answer by showing that if the Strong Exponential Time Hypothesis (SETH) holds,
then determining simulation preorder on deterministic transition systems and simulation equivalence
on non-deterministic transition systems must be quadratic in complexity. 

The Strong Exponential Time Hypothesis (SETH) states that satisfiability of a propositional formula with $n$ propositional
variables cannot be solved in time $O(2^{\delta n})$ for any $\delta<1$ \cite{impagliazzo2001complexity}.
There is also the Exponential Time Hypothesis (ETH), also occurring in 
\cite{impagliazzo2001complexity}, which says that determining satisfiability of
propositional formulas in conjunctive normal form
where each clause has length 3 needs at least time $O(2^{{\epsilon}n})$ for some $\epsilon>0$. The strong
exponential time hypothesis implies the exponential time hypothesis, which in turn implies $P\not=\textit{NP}$.  
Both hypotheses ETH and SETH state that in essence no algorithm for satisfiability significantly 
outperforms brute-force methods.
They are especially useful to prove lower bounds for problems within $P$~\cite{oliveira2020finegrained,williams2015hardness}.

For us the results in~\cite{oliveira2020finegrained} are particularly interesting.  
They deal with the problem of determining the non-emptiness of the 
intersection for $k$ deterministic finite state machines~(\texttt{$k$-DFA-NEI}).
This problem is defined as as
follows. Given $k$ deterministic finite state machines $A_1, \dots, A_k$, each
over an alphabet $\Sigma$ and having $n$ states, determine whether \[\bigcap_{i=1}^{k} 
\Lang(A_i) \stackrel{?}{=} \emptyset\] where $\Lang(A_i)$ is the set of accepted
strings by state machine $A_i$. For unbounded $k$ this problem is
PSPACE-complete~\cite{kozen1977lower}. If the number of input DFAs $k$ is fixed,
this problem can be solved naively in $O(n^k)$ time by constructing the product
automaton of the input.
Under the assumption ETH, it cannot be solved in $O(n^{o(k)})$ \cite[Prop.
3]{fernau2017problems}. Furthermore, 
under the stronger assumption SETH, an algorithm
running in $O(n^{k-\epsilon})$ is impossible for any constant $\epsilon > 0$,
which is more interesting to us.

We show in a quite straightforward way that if simulation preorder on the initial states of two
DFAs, all with $n$ states, can be determined
in time $O(n^{2-\epsilon})$, then $2$-\texttt{DFA-NEI} can be solved in time $O(n^{2-\epsilon})$, which
would imply that the assumption SETH does not hold. 
As a corollary it follows that computing simulation equivalence for the initial states of two 
nondeterministic automata is also inherently quadratic.






\section{SETH implies that DFA-NEI has quadratic complexity}
In this section we rephrase in more detail that the strong exponential time
hypothesis (SETH) implies that calculating the non-emptiness of the intersection
of two deterministic finite state machines requires quadratic time. In particular,
we illustrate by example how a slightly improved algorithm for the non-emptiness
of language intersection would mean an exponential improvement for
\texttt{CNF-SAT}. More detailed proofs of this construction can be found in
\cite[Theorem 7.21]{wehar2017complexity}. 

We start out with some preliminaries and the common notion of a deterministic
finite automaton.
An alphabet is a finite set of letters $\Sigma$. A word is a finite sequence of
letters over an alphabet where we write $\epsilon$ for the empty sequence. 
For a number $i\in \Nat$ the set $\Sigma^i$ is the set
of all sequences of length $i$. The set $\Sigma^* = \bigcup_{i\in\Nat} \Sigma^i$ is
the Kleene closure and contains all finite words over $\Sigma$. Given a word
$w\in\Sigma^*$ and a position $1 \leq i \leq \lvert w\rvert$, we write $w[i]$
for the $i$-th symbol in $w$.

\begin{mydefinition}
  A Deterministic Finite Automaton~(DFA) $A=(S,\Sigma, \delta, F, q_0)$ is a five-tuple
 consisting of: 
  \begin{itemize}
    \item a finite set of states $S$,
    \item a finite set of labels $\Sigma$ called the alphabet,
    \item a deterministic transition function $\delta: S\times \Sigma \mapsto S$,
    \item a set of final states $F\subseteq Q$, and
    \item an initial state $q_0\in Q$.
  \end{itemize}
\end{mydefinition}

The \emph{language} accepted by a DFA $A= (Q, \Sigma, \delta, F, q_0)$,
denoted as $\Lang(A)$, is the set of words $w\in \Sigma^*$ such that the path 
with labels from $w$ starting in $q_0$ ends up in an accepting state. 

\[
  \Lang(A) = \{ w \mid w\in\Sigma^* \text{ and } \delta(q_0, w) \in F\}
\]
where we use the generalised transition function 
$\delta$ by taking $\delta(q,\epsilon)=q$ and 
$\delta(q,a\,w)=\delta(\delta(a,q),w)$ for any state $q$. 

Given a finite number $k$
of DFAs the non-empty intersection problem of DFAs asks whether there is a word
which is accepted by all DFAs. More concretely, it contains all tuples of DFAs 
of which the intersection of the accepted languages is not empty. 
\begin{mydefinition} 
The decision problem $\texttt{$k$-DFA-NEI}$ is the following set of tuples of DFAs
\[
  \texttt{$k$-DFA-NEI} = \{ \langle A_1, \dots, A_k \rangle \mid \bigcap_{i\in[1,k]} \Lang(A_i) \, \neq \, \emptyset\}.
\]
\end{mydefinition}

We define \texttt{DFA-NEI} as the union of 
\texttt{$k$-DFA-NEI} for all $k$. This problem is well known to be PSPACE-complete~\cite{kozen1977lower}. 
For a fixed $k$ the problem is naively solvable in $O(n^k)$ by
computing the product automata. Surprisingly, it turns out that if this can be
computed more efficiently, it also means more efficient algorithms for deciding
\texttt{CNF-SAT}~\cite{wehar2017complexity}. We elaborate on that below.
\bigskip\\
\noindent%
Given an $\ell\in\Nat$, the decision problem $\ell$-\texttt{CNF-SAT} is the variant of the
boolean satisfiability problem over formulas in conjunctive normal form with at
most $\ell$ literals per clause. The computational complexity of deciding
$\ell$-\texttt{CNF-SAT}, for increasing values of $\ell$ is studied
in~\cite{impagliazzo2001complexity}. Let 
\[
 s_\ell = \inf\{\delta \mid  \text{$\ell$-\texttt{CNF-SAT} is solvable in time } 2^{\delta n}\},
\] 
for each $\ell\in\Nat$, where $\inf$ is the infimum. 
The exponential time hypothesis (ETH) asserts that $s_3 > 0$, 
i.e., $3$-\texttt{CNF-SAT} cannot be solved in less than exponential time. 
The
strong exponential time hypothesis~(SETH) asserts that 
\[
\lim_{\ell\to \infty}
s_\ell = 1.\]

Equivalently SETH asserts that for any $\delta < 1$ there is no algorithm
solving \texttt{CNF-SAT} that has a runtime of $O(2^{\delta n})$.

We consider a CNF-formula  $\Phi$ with $m$ clauses $C_1, \dots , C_m$ and an even number $n$
of propositional variables $x_1, \dots , x_n$. We construct the languages
$L^\Phi_1, L^\Phi_2 \subseteq \{\bit{0}, \bit{1}\}^*$ that consist of words $w
\in \Sigma^{n+m}$ in which the first $n$ letters comprise a 
bitstring in which for each
$1 \leq i\leq n$ the bit $w[i]$ encodes the truth assignment for $x_i$. The
second part of $m$ letters encode a gadget which assigns each clause to either
$L^\Phi_1$ or $L^\Phi_2$. We require that $w \in L^\Phi_1$ if and only if $w=
w_{\rho} b_1\cdots b_m$ for a word $w_{\rho}\in \Sigma^n$ modelling a truth
assignment and $b_1,\dots, b_m \in \{\bit{0}, \bit{1}\}$ such that if $b_i =
\bit{0}$ then $C_i$ is satisfied by the assignment to some variable $x_1, \dots,
x_{\frac{1}{2} n}$.  Similarly, $w_\rho b_1\cdots b_m \in L^\Phi_2$ if and only
for all $1\leq i \leq m$ if $b_i = 1$ then $C_i$ is satisfied by the assignment
of some variable $x_{\frac{1}{2}n+1}, \dots, x_n$. 

For example, consider $\Psi = (x_1 \vee \overline{x_2}) \wedge (\overline{x_1}
\vee x_2)$. In Fig.~\ref{fig:dfa-sat} the minimal automata that accepts
the languages $L_1^{\Psi}$ and $L_2^{\Psi}$ are given. 

Observe that, given a formula $\Phi$, for each truth assignment $w_\rho \in
\{0,1\}^n$ there is an extension of $w_C \in \{0,1\}^m$ such that $w_\rho w_c
\in L^\Phi_1 \cap L^\Phi_2$ if and only if $w_\rho$ models a satisfying
assignment for $\Phi$. This means the intersection $L^\Phi_1 \cap L^\Phi_2$ is
not empty if and only if there is a satisfying assignment for $\Phi$.

This allows us to prove satisfiability of $\Phi$ by constructing $L^\Phi_1$
and $L^\Phi_2$ and show that their intersection is not empty. Now note that the
minimal DFAs that accepts $L^\Phi_1$ and $L^\Phi_2$ both contain at most $m n
2^{\frac{1}{2} n}$ states. 

Assume that we can calculate language intersection on two graphs with $N$ states
in time $O(N^{2-\epsilon})$, then we can construct $L^\Phi_1$ and $L^\Phi_2$, and
calculate their intersection in time
$O(n^{2-\epsilon}m^{2-\epsilon}2^{n(1-\frac{1}{2}\epsilon)})$ determining \texttt{CNF-SAT}. 
But this refutes the Strong Exponential Time Hypothesis, saying that for
\texttt{CNF-SAT} cannot be solved $2^{\delta n}$ for any $\delta<1$.

%
%

\begin{figure}
  \begin{center}
  \begin{tikzpicture}[initial text=, >=stealth',semithick, node distance=1.5cm]
    \node[initial above, state] (a) {$s_0$};
    \node[state, below right=0.8cm and 0.6cm of a] (b1) {${x_1}$};
    \node[state, below left=0.8cm and 0.6cm of a] (a1) {$\overline{x_1}$};
    \node[state, below of= a1] (a2) {}; 
    \node[state, below of = a2] (a3) {};
    \node[state, below of= b1] (b2) {}; 
    \node[state, below of = b2] (b3) {};
    \node[state,accepting, below= 5cm of a] (accept) {$\top$};
    
    \node[initial above, state,right=4cm of a] (c) {$t_0$};
    \node[state, below of= c] (c1) {$ $};
    \node[state, below right=0.8cm and 0.6cm of c1] (c2) {$x_2$};
    \node[state, below left=0.8cm and 0.6cm of c1] (d2) {$\overline{x_2}$};
    \node[state, below of = c2] (c3) {};
    \node[state, below of = d2] (d3) {};
    \node[state,accepting, below= 5cm of c] (accept2) {$\top$};
  
    \path[->] (a) edge node[above right] {$\bit1$} (b1)
              (a) edge node[above left] {$\bit0$} (a1)
              (a1) edge [bend right] node[left] {$\bit0$} (a2)
              (a1) edge [bend left] node[right] {$\bit1$} (a2)
              (b1) edge [bend right] node[left] {$\bit0$} (b2)
              (b1) edge [bend left] node[right] {$\bit1$} (b2)
              (a2) edge node [left] {$\bit 1$} (a3)
              (b2) edge [bend right] node[left] {$\bit0$} (b3)
              (b2) edge [bend left] node[right] {$\bit1$} (b3)
              (b3) edge [] node[below right] {$\bit1$} (accept)
              (a3) edge [bend right] node[below left] {$\bit0$} (accept)
              (a3) edge [bend left] node[above right] {$\bit1$} (accept)
              (c) edge [bend right] node[left] {$\bit0$} (c1)
              (c) edge [bend left] node[right] {$\bit1$} (c1)
              (c1) edge node [above left] {$\bit 0$} (d2)
              (c1) edge node [above right] {$\bit 1$} (c2)
              (c2) edge node[right] {$\bit0$} (c3)
              (d2) edge [bend right] node[left] {$\bit0$} (d3)
              (d2) edge [bend left] node[right] {$\bit1$} (d3)
              (d3) edge [] node[below left] {$\bit0$} (accept2)
              (c3) edge [bend right] node[above left] {$\bit0$} (accept2)
              (c3) edge [bend left] node[below right] {$\bit1$} (accept2)
              ;

  \end{tikzpicture}
  \end{center}
  \caption{Automata accepting $L^\Psi_1$ (left), and $L^\Psi_2$ (right) for 
  $\Psi=(x_1 \vee \overline{x_2}) \wedge (\overline{x_1}
\vee x_2)$.\label{fig:dfa-sat}}
  \end{figure}
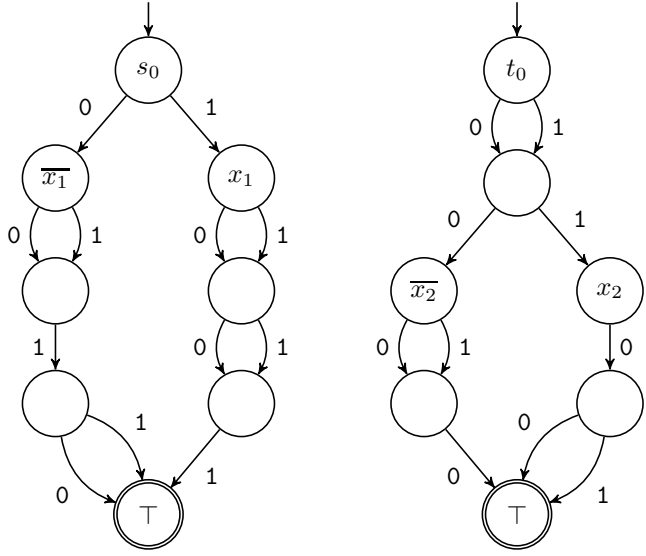

  


\begin{mytheorem}{\cite[Theorem 7.21]{wehar2017complexity}}
 If $2$-\texttt{DFA-NEI} can be solved in $O(n^{2-\epsilon})$ for some constant
 $\epsilon >0$, then SETH is false.
\end{mytheorem}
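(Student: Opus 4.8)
The plan is to establish the contrapositive by a fine-grained reduction from \texttt{CNF-SAT} to $2$-\texttt{DFA-NEI}, turning the construction sketched above into a formal argument. Given a CNF-formula $\Phi$ with $m$ clauses $C_1,\dots,C_m$ over an even number $n$ of variables $x_1,\dots,x_n$, I would first build two DFAs $A_1$ and $A_2$ over the alphabet $\{\bit0,\bit1\}$ accepting exactly the languages $L^\Phi_1$ and $L^\Phi_2$, where $A_1$ is responsible for the variables $x_1,\dots,x_{\frac{1}{2}n}$ and $A_2$ for the variables $x_{\frac{1}{2}n+1},\dots,x_n$. Each machine reads a word of length $n+m$: during the first $n$ symbols it records the relevant half of the truth assignment, and during the remaining $m$ symbols it checks, for the bit $b_i$ attached to clause $C_i$, whether $C_i$ is satisfied by the variables under its responsibility.

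Second, I would verify the two properties that make the reduction work. The correctness property is exactly the observation stated above: for every assignment $w_\rho\in\{\bit0,\bit1\}^n$ there is an extension $w_C\in\{\bit0,\bit1\}^m$ with $w_\rho w_C\in L^\Phi_1\cap L^\Phi_2$ precisely when $w_\rho$ satisfies $\Phi$, so $\langle A_1,A_2\rangle$ is a positive instance of $2$-\texttt{DFA-NEI} if and only if $\Phi$ is satisfiable. The size property is the bound that the minimal DFAs for $L^\Phi_1$ and $L^\Phi_2$ each have $N = O(mn2^{\frac{1}{2}n})$ states: the factor $2^{\frac{1}{2}n}$ comes from storing a partial assignment to half of the variables, the factor $n$ from tracking the position among the first $n$ symbols, and the factor $m$ from locating the current clause and recording whether it has been satisfied.

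Third, assuming an algorithm deciding $2$-\texttt{DFA-NEI} in time $O(N^{2-\epsilon})$, I would run it on $\langle A_1,A_2\rangle$. Since $N=O(mn2^{\frac{1}{2}n})$, this takes time
\[
  O\bigl((mn2^{\frac{1}{2}n})^{2-\epsilon}\bigr)
  = O\bigl(m^{2-\epsilon}n^{2-\epsilon}2^{n(1-\frac{1}{2}\epsilon)}\bigr).
\]
Adding the polynomial cost of constructing $A_1$ and $A_2$, the whole procedure decides satisfiability of $\Phi$. Because $m$ is polynomially bounded in $n$ for any reasonable encoding, the prefactor $m^{2-\epsilon}n^{2-\epsilon}$ is subexponential in $n$, so the runtime is $O(2^{\delta n})$ for every $\delta$ with $1-\frac{1}{2}\epsilon<\delta<1$. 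This contradicts SETH, which asserts that \texttt{CNF-SAT} admits no $O(2^{\delta n})$ algorithm for any $\delta<1$.

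The main obstacle I expect is the DFA construction together with its size bound, rather than the complexity bookkeeping. One must exhibit genuinely deterministic machines of size $O(mn2^{\frac{1}{2}n})$ — in particular, the splitting of the variables into two halves is what keeps the state count at $2^{\frac{1}{2}n}$ rather than $2^{n}$ — and argue that their accepted languages are exactly $L^\Phi_1$ and $L^\Phi_2$, so that the intersection captures satisfiability. Once this is in place, the exponent manipulation $\frac{1}{2}(2-\epsilon)=1-\frac{1}{2}\epsilon$, together with the fact that polynomial prefactors cannot raise the exponential base, delivers the contradiction immediately.
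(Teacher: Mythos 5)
Your proposal is correct and follows essentially the same route as the paper, which itself only sketches this construction (attributing the full proof to Wehar): split the variables into two halves, build the two DFAs of size $O(mn2^{\frac{1}{2}n})$ accepting $L^\Phi_1$ and $L^\Phi_2$, and observe that a sub-quadratic intersection test yields a $2^{\delta n}$ algorithm for some $\delta<1$. Your explicit remark that $m$ is polynomially bounded in $n$ (which holds for $\ell$-\texttt{CNF-SAT} with fixed $\ell$, the setting in which SETH is formally stated) is a point the paper glosses over, and is a welcome addition rather than a deviation.
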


\section{SETH implies simulation has quadratic complexity}
In this section we show that determining simulation preorder on deterministic, and simulation equivalence 
on non-deterministic transition systems is necessarily quadratic, assuming SETH is valid. 
Simulation is typically defined on labelled transition systems which divert slightly from DFAs. 
\begin{mydefinition}%
   A Labelled Transitions System~(LTS) $M = (S, \Sigma, \pijl{}, s_0)$ is a
   four-tuple consisting of:
   \begin{itemize}
   \item a finite set of states $S$,
   \item a finite set of action labels $\Sigma$,
   \item a transition relation $\pijl{} \subseteq S\times \Sigma \times S$, and
   \item the initial state $s_0 \in S$.
   \end{itemize}
\end{mydefinition}

Given an LTS $L= (S,Act, \pijl{})$, we write $s\pijl{a}s'$ when $(s,a,s') \in
\pijl{}$. The LTS $L$ is called \textit{deterministic} iff there is at most
one outgoing transition for every combination of state $s\in S$ and action label
$a\in Act$, i.e. $\lvert \{ s' \mid s\pijl{a}s' \}\rvert \leq 1$ for every
$(s,a) \in S\times \Sigma$.

\begin{mydefinition}
  Given an LTS $M=(S,\Sigma, \pijl{}, s_0)$, a relation $R \subseteq S\times S$
  is called a \textit{simulation relation} iff for all $(s,t) \in R$ it holds that
  \begin{itemize}
    \item for each  $s\pijl{a} s'$ there is a transition $t\pijl{a}t'$ such that
    $(s',t') \in R$.
  \end{itemize}
\end{mydefinition}

The largest simulation relation, written as $\simby$, is called similarity. Given
an LTS $M_1 = (S_1, Act_1, \pijl{}_1, s_1)$ and two states $s,t\in S_1$ we say
$s$ is simulated by $t$ iff $s \simby t$. Given a second LTS $M_2 = (S_2,
Act_2, \pijl{}_2,s_2)$, we write $M_1 \simby M_2$ iff $s_1\simby s_2$ in the
combined LTS $M = (S_1 \cup S_2, Act_1 \cup Act_2, \pijl{}_1 \cup
\pijl{}_2, s_1)$, where we assume w.l.o.g. that the states of $M_1$ and $M_2$
are disjoint, e.g. $S_1 \cap S_2 = \emptyset$. 

We say two states $s,t$ are \textit{simulation equivalent}, written $s\simeq t$, iff they
simulate each other. A simulation relation which is symmetric is called a
bisimulation relation.

We introduce a mapping $\alpha$ from automata to deterministic LTSs such that given two DFAs
$A_1, A_2$, for the projected deterministic LTSs $M_1 = \alpha(A_1), M_2 = \alpha(A_2)$ it
holds that $M_1 \simby M_2$ if and only if $\Lang(A_1)\subseteq \Lang(A_2)$.

Informally, the mapping $\alpha$ maintains the same transition structure in the
deterministic LTSs but adds one state and action label that encodes the accepting states of
the automata. More formally, given a DFA $A=(Q,\Sigma, \delta, F, q_0)$, we
define the deterministic LTS $\alpha(A) = (Q \cup \{\top\}, \Sigma \cup \{\checkmark\},
\pijl{}, q_0)$, with a fresh symbols $\top\not\in Q,\checkmark\not\in \Sigma$,
and where the transition relation $\pijl{}$ is defined as:
\begin{align*}
{\pijl{}} = \{ (q,a,\delta(q,a)) \mid \text{for each } q,a\in Q\times \Sigma\}\, \cup\, 
\{(q,\checkmark, \top) \mid \textrm{for each } q\in F\}.
\end{align*}

As the construction above is quite straightforward, it is easy to see that we have the following theorem.
\begin{mytheorem}\label{thm:subset-sim}
 Let $A,B$ be DFAs over the alphabet $\Sigma$, then 
 \[ 
  \Lang(A) \subseteq \Lang(B) \iff \alpha(A) \simby \alpha(B)
 \]
\end{mytheorem}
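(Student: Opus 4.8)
The plan is to prove the two implications separately, in both directions leaning on two structural facts about $\alpha$: the transition structure of $\alpha(B)$ is deterministic, and the only $\checkmark$-labelled transitions are those leaving final states of $B$ and landing in the fresh sink. To fix notation, write $A=(Q_A,\Sigma,\delta_A,F_A,q_0^A)$ and $B=(Q_B,\Sigma,\delta_B,F_B,q_0^B)$, and let $\top_A,\top_B$ denote the two fresh sink states of $\alpha(A),\alpha(B)$ in the disjoint combined LTS.

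For the direction $(\Leftarrow)$, suppose $\alpha(A)\simby\alpha(B)$, witnessed by a simulation relation $R$ with $(q_0^A,q_0^B)\in R$. First I would lift the single-step simulation condition to words: whenever $(s,t)\in R$ and $s\pijl{w}s'$ along a $w$-labelled path, there is a matching path $t\pijl{w}t'$ with $(s',t')\in R$; this is a routine induction on $\size{w}$. Now take any $w\in\Lang(A)$. The unique $w$-run of $A$ ends in some $q\in F_A$, so in $\alpha(A)$ we have $q_0^A\pijl{w}q\pijl{\checkmark}\top_A$. The lifting yields a path $q_0^B\pijl{w}q'$ with $(q,q')\in R$, and applying the simulation condition to $q\pijl{\checkmark}\top_A$ forces a transition $q'\pijl{\checkmark}t'$. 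Since $\checkmark$-transitions exist in $\alpha(B)$ only out of states in $F_B$, we conclude $q'\in F_B$. As $\alpha(B)$ is deterministic, $q_0^B\pijl{w}q'$ is the unique $w$-run, so $\delta_B(q_0^B,w)=q'\in F_B$, i.e.\ $w\in\Lang(B)$.

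For the direction $(\Rightarrow)$, I would exhibit a witnessing simulation relation explicitly. Writing $A_q$ for $A$ with its initial state replaced by $q$ (and $B_{q'}$ analogously), set
\[
  R=\{(q,q')\in Q_A\times Q_B \mid \Lang(A_q)\subseteq\Lang(B_{q'})\}\cup\{(\top_A,\top_B)\}.
\]
The hypothesis $\Lang(A)\subseteq\Lang(B)$ is precisely $(q_0^A,q_0^B)\in R$, so it remains to check that $R$ is a simulation. For a pair $(q,q')\in R$ and an ordinary transition $q\pijl{a}\delta_A(q,a)$ with $a\in\Sigma$, the matching move is the deterministic transition $q'\pijl{a}\delta_B(q',a)$; membership of the target pair in $R$ follows by prefixing, since any $w$ accepted from $\delta_A(q,a)$ gives $aw$ accepted from $q$, hence $aw$ accepted from $q'$ by the inclusion, hence $w$ accepted from $\delta_B(q',a)$. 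The pair $(\top_A,\top_B)$ is handled vacuously, as $\top_A$ has no outgoing transitions.

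The step I expect to require the most care is the $\checkmark$-transition, because there the matching move depends on $q'$ itself being final rather than on an inclusion between successor languages. Concretely, if $q\pijl{\checkmark}\top_A$ then $q\in F_A$, so $\epsilon\in\Lang(A_q)$; the inclusion encoded in $R$ then gives $\epsilon\in\Lang(B_{q'})$, which is exactly the statement $q'\in F_B$. Hence $q'$ offers the transition $q'\pijl{\checkmark}\top_B$, and the target pair $(\top_A,\top_B)$ lies in $R$ by construction. This completes the verification that $R$ is a simulation, and therefore $\alpha(A)\simby\alpha(B)$.
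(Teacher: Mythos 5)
Your proof is correct and complete. The paper itself omits any argument for this theorem (it is dismissed as ``easy to see''), and your two directions --- lifting the simulation step to words and reading acceptance off the $\checkmark$-transition for $(\Leftarrow)$, and exhibiting the language-inclusion relation $R$ as an explicit simulation for $(\Rightarrow)$ --- are exactly the standard argument the authors are appealing to; note that the existence of the matching move $q'\pijl{a}\delta_B(q',a)$ in the $(\Rightarrow)$ direction relies on $\delta_B$ being a total function, which the paper's definition of a DFA does guarantee.
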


The construction $\alpha(A)$ is computable in linear time. This allows us to
compute \texttt{$2$-DFA-NEI} by translating the involved DFAs to deterministic LTSs using
$\alpha$.

\begin{mytheorem}
  If for some time function $f$ similarity on two deterministic LTSs with $n$ states can be decided in $f(n)$ steps then
  $2\texttt{-DFA{-}NEI}$ for input DFAs of $n$ states is computable in $f(n) +
  O(n)$ steps.
\end{mytheorem}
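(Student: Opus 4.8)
The plan is to reduce \texttt{2-DFA-NEI} to a single similarity query, exploiting Theorem~\ref{thm:subset-sim} together with the fact that DFAs are closed under complementation in linear time. The starting observation is that, for two DFAs $A_1,A_2$, the intersection $\Lang(A_1)\cap\Lang(A_2)$ is non-empty exactly when $\Lang(A_1)$ is \emph{not} contained in the complement language $\overline{\Lang(A_2)}$. Since Theorem~\ref{thm:subset-sim} converts language inclusion into a similarity test, this reduces the non-emptiness question to checking whether $\alpha(A_1)\simby\alpha(\overline{A_2})$ \emph{fails}, where $\overline{A_2}$ denotes the complement DFA of $A_2$.

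First I would make the complement step explicit. Because the transition function $\delta$ of a DFA is total, every DFA considered here is complete, so its complement is obtained by keeping $Q$, $\Sigma$, $\delta$ and $q_0$ unchanged and replacing the set of final states $F$ by $Q\setminus F$. This yields a DFA $\overline{A_2}$ with $\Lang(\overline{A_2})=\Sigma^{*}\setminus\Lang(A_2)=\overline{\Lang(A_2)}$, it has the same number of states as $A_2$, and it is computable in $O(n)$ time by flipping the acceptance flag of each state.

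With this in hand the chain of equivalences is
\begin{align*}
  \langle A_1,A_2\rangle\in\texttt{2-DFA-NEI}
  &\iff \Lang(A_1)\cap\Lang(A_2)\neq\emptyset\\
  &\iff \Lang(A_1)\not\subseteq\overline{\Lang(A_2)}=\Lang(\overline{A_2})\\
  &\iff \alpha(A_1)\not\simby\alpha(\overline{A_2}),
\end{align*}
where the last step is Theorem~\ref{thm:subset-sim} applied to $A_1$ and $\overline{A_2}$. The resulting algorithm is therefore: build $\overline{A_2}$ in $O(n)$ time, apply the linear-time translation $\alpha$ to obtain the two deterministic LTSs, run the assumed similarity procedure once, and report non-emptiness precisely when the initial states are \emph{not} related. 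The two constructions together cost $O(n)$, and the single similarity test costs $f(n)$; strictly, since $\alpha$ adds one state and one action label, the test is performed at $n+1$ states, giving $f(n+1)+O(n)$, which is absorbed into $f(n)+O(n)$ for the running times of interest.

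The content here is the correctness of the two reductions rather than any computational difficulty, and the step requiring the most care is the direction of the implications. One must verify that negating the similarity preorder genuinely corresponds to non-emptiness of the intersection, which hinges on $A_2$ being complete, so that complementing its final states really produces $\overline{\Lang(A_2)}$ and not merely some subset of it. Given Theorem~\ref{thm:subset-sim}, no further reasoning about the simulation relation itself is needed.
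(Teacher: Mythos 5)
Your proposal is correct and follows essentially the same route as the paper: complement $B$ by flipping final states, reduce emptiness of the intersection to the inclusion $\Lang(A)\subseteq\Lang(\overline{B})$, and decide that inclusion with one similarity query on $\alpha(A)$ and $\alpha(\overline{B})$, negating the answer. You are merely more explicit than the paper about the completeness of the DFA and the constant-size overhead introduced by $\alpha$, which is fine.
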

\begin{proof}
  Not that deciding $\Lang(A) \cap \Lang(B) = \emptyset$ is equivalent to
deciding $\Lang(A) \subseteq \Sigma^*\setminus \Lang(B)$. The DFA $\overline{B}$
is the complement of $B$, e.g. with all accepting and non-accepting states
swapped, such that $\Lang(\overline{B}) = \Sigma^* \setminus \Lang(B)$.

Now by Theorem~\ref{thm:subset-sim} it holds that $\alpha(A) \simby
\alpha\left(\overline{B}\right)$ if and only if $\Lang(A) \cap \Lang(B) =
\emptyset$. Computing $\alpha(A),\alpha(\overline{B})$ can be done in $O(n)$,
and hence any $f(n)$ algorithm for similarity could be translated to
$2\texttt{-DFA-NEI}$.
\end{proof}

This immediately translates to the following corollary.

\begin{mycorollary}\label{cor:similarity}
  SETH implies that similarity for deterministic LTSs can not be decided in $O(n^{2-\epsilon})$. 
\end{mycorollary}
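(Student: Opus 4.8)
The plan is to prove the corollary by contraposition, chaining the theorem just proved with the cited lower bound for $2$-\texttt{DFA-NEI}. Concretely, I would assume for contradiction that similarity on deterministic LTSs with $n$ states is decidable in time $O(n^{2-\epsilon})$ for some fixed constant $\epsilon > 0$, and then derive a consequence that contradicts SETH.

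First I would take the assumed similarity algorithm as the witnessing time function, setting $f(n) = c\,n^{2-\epsilon}$ for a suitable constant $c$. Feeding this into the preceding theorem yields an algorithm deciding $2$-\texttt{DFA-NEI} on input DFAs of $n$ states in $f(n) + O(n)$ steps. The only bookkeeping point is that this composite bound stays inside $O(n^{2-\epsilon})$: one may assume without loss of generality that $\epsilon \leq 1$, since any algorithm achieving the exponent $2-\epsilon$ for a smaller $\epsilon$ also achieves it for every larger one, and under this normalization $2 - \epsilon \geq 1$, so the additive $O(n)$ term is dominated and $f(n) + O(n) = O(n^{2-\epsilon})$.

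Then I would invoke the theorem of \cite{wehar2017complexity} quoted above, which states that any $O(n^{2-\epsilon})$ algorithm for $2$-\texttt{DFA-NEI} falsifies SETH. Composing the two steps, the hypothetical similarity algorithm refutes SETH; taking the contrapositive gives precisely the corollary, namely that if SETH holds then similarity on deterministic LTSs admits no $O(n^{2-\epsilon})$ algorithm. I anticipate no real obstacle here, as the substantive work is already complete: the correctness of the reduction $\alpha$ is Theorem \ref{thm:subset-sim}, its linear construction cost is absorbed into the $O(n)$ overhead of the preceding theorem, and the conditional hardness of $2$-\texttt{DFA-NEI} is supplied by the quoted result. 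The single routine check is the asymptotic normalization $\epsilon \leq 1$ described above.
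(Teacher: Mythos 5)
Your proposal is correct and matches the paper's intent exactly: the paper gives no explicit proof, stating only that the corollary ``immediately translates'' from the preceding theorem combined with the quoted result on $2$-\texttt{DFA-NEI}, which is precisely the chain you spell out. Your extra bookkeeping step (normalizing $\epsilon \leq 1$ so that the additive $O(n)$ overhead is absorbed into $O(n^{2-\epsilon})$) is a harmless and reasonable addition.
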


On deterministic structures simulation equivalence coincides with bisimilarity.
Meaning that deciding simulation equivalence on deterministic structures can be
done in almost linear
time~\cite{hopcroft1971linear,DBLP:conf/coco/Fischer72,tarjan1975efficiency},
where `almost' refers to a multiplicative factor with the inverse Ackermann's
function. However, using one non-deterministic transition a faster than
quadratic algorithm for simulation equivalence on LTSs in general violates SETH.

\begin{mycorollary}
  SETH implies that simulation equivalence cannot be decided in $O(n^{2-\epsilon})$.
\end{mycorollary}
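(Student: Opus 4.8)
The plan is to reduce similarity on deterministic LTSs, shown hard in Corollary~\ref{cor:similarity}, to simulation equivalence on an LTS that is deterministic except at a single state. Suppose we are given deterministic LTSs $M_1,M_2$ with initial states $s$ and $t$ and want to decide $s\simby t$. I would pick a fresh action label $\star$ and two fresh states $r,r'$, and form a single LTS consisting of $M_1$, $M_2$ and the three new transitions $r\pijl{\star}s$, $r\pijl{\star}t$ and $r'\pijl{\star}t$, keeping all original transitions. The only state with two equally-labelled outgoing transitions is $r$; every other state, $r'$ included, stays deterministic, so the construction introduces exactly one nondeterministic branch.

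The heart of the argument is the equivalence $r\simeq r'\iff s\simby t$. For the direction $r'\simby r$ I would exhibit the simulation containing $(r',r)$ together with the identity on the original states: the single move $r'\pijl{\star}t$ is matched by $r\pijl{\star}t$ and $t$ simulates itself, so $r'\simby r$ holds unconditionally. For $r\simby r'$, the move $r\pijl{\star}t$ is again matched by $r'\pijl{\star}t$, whereas $r\pijl{\star}s$ can only be answered by $r'\pijl{\star}t$, which forces $s\simby t$; conversely, any simulation witnessing $s\simby t$ extends to one witnessing $r\simby r'$. Since $\star$ occurs nowhere in $M_1,M_2$ and no transition was added to or from an original state, similarity among the original states is unchanged, so $r\simby r'$ holds precisely when $s\simby t$. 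Putting the two directions together gives $r\simeq r'\iff s\simby t$.

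Finally I would account for the cost: the reduction adds two states and three transitions, runs in $O(n)$ time, and yields an LTS with $n+2$ states that is deterministic save for the branch at $r$. Hence an $O(N^{2-\epsilon})$ procedure for simulation equivalence would decide $s\simby t$ on deterministic LTSs in $O((n+2)^{2-\epsilon})+O(n)=O(n^{2-\epsilon})$ time, contradicting Corollary~\ref{cor:similarity} and therefore SETH. I expect the only real obstacle to lie in the verification of the biconditional, namely arguing carefully that grafting the fresh $\star$-branching onto the two systems neither creates nor destroys simulations between original states, so that the simulation-equivalence test collapses to the single preorder query $s\simby t$. The earlier observation that on fully deterministic systems simulation equivalence coincides with near-linearly computable bisimilarity also explains why such a reduction must introduce at least one nondeterministic transition.
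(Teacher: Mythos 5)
Your proposal is correct and is essentially the paper's own proof: the same gadget of two fresh states with three added transitions (one nondeterministic branch over the first fresh state), the same key equivalence reducing the simulation-equivalence query to the single preorder query on the original initial states, and the same $O(n)$ accounting against Corollary~\ref{cor:similarity}. The only cosmetic difference is that you use a fresh action label where the paper reuses an existing $a\in\Sigma$; both work equally well.
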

\begin{proof}
  In order to see that deciding simulation equivalence is also computationally
  equivalent to deciding $2\texttt{-DFA-NEI}$ we reduce from deciding similarity
  on deterministic LTSs by adding one non-deterministic state. Given LTSs $M_1 = (S_1,
  \Sigma, \pijl{}_1, s_0)$ and $M_2 = (S_2,  \Sigma, \pijl{}_2, t_0)$, we
  construct the LTS $M = (S_1 \cup S_2 \cup \{s,t\}, \pijl{}_1 \cup \pijl{}_2
  \cup \{(s ,a, s_0), (s,a,t_0), (t,a,t_0)\}$ with two fresh states $s,t\not\in
  S_1 \cup S_2$ for some $a\in\Sigma$. See Fig.~\ref{fig:M}. Now in $M$ it holds that $s \simeq t \iff
  s_0 \simby t_0$. If simulation equivalence for LTSs is solvable in
  $O(n^{2-\epsilon})$, then, using this construction, similarity is solvable in
  $O(n^{2-\epsilon})$. In that case it follows from Corollary~\ref{cor:similarity} that SETH is false.
\end{proof}

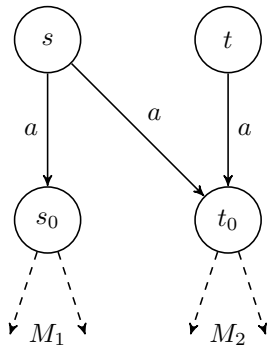
\begin{figure}
  \centering
\begin{tikzpicture}[initial text=, >=stealth',semithick, node distance=1.5cm]
  \node[state] (s) {$s$};
  \node [state, right= of s] (t) {$t$};

  \node[state,below= of s] (m1) {$s_0$};
  \node[state, below= of t] (m2) {$t_0$};
  \node[below= 0.8cm of m1] (a) {$M_1$};
  \node[below= 0.8cm of m2] (b) {$M_2$};

  \path[->] (s) edge node[left] {$a$} (m1)
            (t) edge node[right] {$a$} (m2)
            (s) edge node[above right] {$a$} (m2);
            
  \path[->, dashed] (m1) edge ([xshift=0.5cm] a)
            (m1) edge ([xshift=-0.5cm] a)
            (m2) edge ([xshift=0.5cm] b)
            (m2) edge ([xshift=-0.5cm] b);     
\end{tikzpicture}
\caption{The LTS $M$.  \label{fig:M}}
\end{figure}

The proof technique employed in this paper appears to be quite universal. It
promises to be useable to show fine-grained, i.e., sub-exponential, lower bounds
for a much wider classes of problems in the domain of process theory and model
checking. Unfortunately, as yet, we were not able to do so.

\bibliographystyle{plainurl}
\bibliography{bibliography}
\end{document}